\documentclass[%
 reprint,
 amsmath,amssymb,
 aps,
 prx,
]{revtex4-2}

\usepackage{graphicx}
\usepackage{dcolumn}
\usepackage{amsmath, amssymb, amsthm, bm, amsfonts, mathrsfs, bbm}
\usepackage{dsfont}
\usepackage{physics}
\usepackage[colorlinks=true,linkcolor=blue,citecolor=blue,urlcolor=blue]{hyperref}

\newtheorem{theorem}{Theorem}

\newtheorem{lemma}{Lemma}

\newcommand{\E}{\mathbb{E}}

\begin{document}

\preprint{APS/123-QED}

\title{Quantum tomography for non-iid sources}

\author{Leonardo Zambrano}
\email{leonardo.zambrano@icfo.eu}
\affiliation{ICFO - Institut de Ciencies Fotoniques, The Barcelona Institute of Science and Technology, 08860 Castelldefels, Barcelona, Spain}

\begin{abstract}
Quantum state and process tomography are typically analyzed under the assumption that devices emit independent and identically distributed (i.i.d.) states or channels. In realistic experiments, however, noise, drift, feedback, or adversarial behavior violate this assumption. We show that projected least-squares tomography remains statistically optimal even under fully adaptive state and channel preparation. Specifically, we prove that the sample complexity for reconstructing the time-averaged state or channel matches the optimal i.i.d. scaling for non-adaptive, single-copy measurements. For rank-$r$ states, the sample complexity is $\mathcal{O}(d r^2/\epsilon^2)$ to achieve accuracy $\epsilon$ in trace distance, while for process tomography it is $\mathcal{O}(d^6/\epsilon^2)$ to achieve accuracy $\epsilon$ in diamond distance. Thus, dropping the i.i.d.\ assumption does not increase the fundamental sample complexity of quantum tomography, but only changes the interpretation of the reconstructed object.
\end{abstract}

\maketitle

\section{Introduction}

Quantum state tomography (QST) is one of the central primitives of quantum information science. It provides a systematic procedure to reconstruct an unknown quantum state from measurement data and underlies experimental validation in quantum computing, quantum communication, and quantum sensing~\cite{hradil1997quantum, gross2010quantum, o2015efficient, haah2017sample, anshu2024survey}. In its standard formulation, QST assumes that the source emits $N$ independent and identically distributed (i.i.d.) copies of a fixed quantum state $\rho$. Under this assumption, measurement outcomes are independent random variables drawn from a fixed distribution, and powerful concentration tools, such as matrix Bernstein inequalities~\cite{tropp2012user}, can be used to derive finite-sample guarantees. In particular, for single-copy, non-adaptive informationally complete (IC) measurements, projected least-squares (PLS) quantum tomography \cite{guta2020fast, surawy2022projected, zambrano2025fast} achieves near-optimal sample complexity scaling as $N = \mathcal{O}\!\left({d r^{2}}/{\epsilon^{2}}\right)$ for estimating a rank-$r$ state with accuracy $\epsilon$ in trace distance \cite{haah2017sample}.

However, the i.i.d.\ assumption is rarely satisfied in realistic experiments. Quantum devices drift in time due to thermal fluctuations, laser instabilities, calibration errors, and slow environmental noise~\cite{klimov2018fluctuations, proctor2020detecting}. In adaptive experiments, the state prepared at time $t$ may depend explicitly on previous measurement outcomes through feedback or control loops~\cite{white2020demonstration, mcewen2021removing}. In more adversarial scenarios, the preparation procedure may react strategically to previously revealed information~\cite{pirandola2020advances, fawzi2024learning}. In all these cases, the sequence of emitted states $\rho_{1}, \ldots, \rho_{N}$ cannot be modeled as identical copies of a single density matrix. Instead, it is more appropriate to view the source as generating a random trajectory of states that may depend arbitrarily on the past.

Once the i.i.d.\ assumption is dropped, the usual statistical analysis breaks down \cite{van2013quantum}. Matrix concentration tools such as matrix Bernstein or Hoeffding inequalities fundamentally require independent increments~\cite{vershynin_2018}. In a drifting or adaptive setting, the measurement outcome at time $t$ is conditionally distributed according to a state $\rho_{t}$ that may itself depend on all previous history. As a result, the estimation error can no longer be expressed as a sum of independent random matrices, and the standard proof techniques \cite{guta2020fast, surawy2022projected, zambrano2025fast} used in quantum tomography do not apply.

In this work, we show that the rigorous guarantees of PLS quantum tomography can be extended to the non-i.i.d.\ regime. The central observation is that, although the prepared state $\rho_t$ may depend arbitrarily on the experimental history, it is effectively fixed at the moment of measurement. Because the subsequent measurement outcome is solely governed by the Born rule applied to this fixed state, the resulting single-shot estimator $\hat{\rho}_t$ remains conditionally unbiased. This ensures that estimation errors accumulate as centered random fluctuations rather than systematic drift. This structure allows us to replace standard independence-based bounds with concentration inequalities designed for adaptive processes, specifically, the Matrix Freedman inequality~\cite{tropp2011freedman}.

Our main result shows that PLS quantum state tomography remains statistically optimal even under fully adaptive state preparation. Informally, for measurements given by a complex projective $2$-design in dimension $d$, the number of samples required to estimate the time-averaged state $\bar{\rho}_{N} = \frac{1}{N} \sum_{t=1}^{N} \rho_{t}$ within trace distance $\epsilon$ scales optimally as $N = \mathcal{O}\!\left({d r^{2}}/{\epsilon^{2}} \right)$, where $r$ is the rank of $\bar{\rho}_N$. For Pauli basis measurements on $n$ qubits, the sample complexity scales as $N = \mathcal{O}\!\left({3^{n} r^{2}}/{\epsilon^{2}}\right)$, matching the known i.i.d.\ PLS rates~\cite{guta2020fast}. Furthermore, by leveraging the Choi--Jamio{\l}kowski isomorphism, we demonstrate that this framework extends naturally to quantum process tomography, allowing us to estimate a time-averaged channel $\bar{\mathcal{E}}_N$ with an optimal sample complexity of $N = \mathcal{O}\!\left(d^6/\epsilon^2\right)$ in diamond distance~\cite{surawy2022projected, oufkir2023sample}.  Thus, remarkably, dropping the i.i.d.\ assumption does not increase the fundamental scaling of PLS tomography.

We also characterize what can be learned about the trajectory beyond its first empirical moment. An affine functional $g$ satisfies $\frac{1}{N} \sum_t g(\rho_t) = g(\bar\rho_N)$ and is therefore accessible from the reconstructed average. In contrast, for every non-affine $g$, if no prior assumptions or structural constraints are placed on the trajectory $\rho_1, \ldots, \rho_N$, no protocol receiving a single copy of each state can estimate $\frac{1}{N}\sum_t g(\rho_t)$ with vanishing worst-case error, even if it performs an arbitrary collective measurement. Thus the restriction to $\bar{\rho}_N$ is not specific to PLS tomography. Convex and concave functionals remain partially accessible through one-sided Jensen bounds.

The remainder of this paper is structured as follows. In Section~\ref{sec:state}, we present the PLS protocol for quantum state tomography and establish our main non-asymptotic guarantees. In Section~\ref{sec:channel}, we extend the framework to quantum process tomography via the Choi--Jamio{\l}kowski isomorphism. Finally, we summarize our conclusions in Section~\ref{sec:conclusions}. Detailed proofs are provided in the Appendices.

\section{Robust projected least-squares quantum state tomography}\label{sec:state}

\subsection{Non-i.i.d.\ physical model}

We consider a scenario in which a quantum source generates a sequence of $N$ quantum states, but we drop the standard assumption that these states are independent and identically distributed. Instead, we model the experiment as an interaction between an experimenter and a potentially adaptive environment (or adversary).

The experiment proceeds in rounds indexed by $t = 1, \dots, N$. At each round $t$, the source prepares a quantum state $\rho_t \in \mathcal{D}(\mathbb{C}^d)$. We allow the source to be fully adaptive: the state $\rho_t$ may depend arbitrarily on the entire history of the experiment up to round $t-1$. This history, denoted by $\mathcal{F}_{t-1}$, captures the maximal knowledge of the environment or adversary and includes all previous measurement settings, observed outcomes, and potentially hidden variables such as thermal fluctuations or calibration drifts.

Consequently, there is no single underlying target state $\rho$. Instead, the experiment generates a trajectory of states $\{\rho_1, \dots, \rho_N\}$. Our objective is to estimate the averaged state over the $N$ samples,
\begin{align}
    \bar{\rho}_N = \frac{1}{N} \sum_{t=1}^{N} \rho_t,
\end{align}
which captures the effective behavior of the device during the experiment.

We assume that at each round $t$, the source fixes the state $\rho_t$ based solely on the past history $\mathcal{F}_{t-1}$. Once this state is prepared, an IC measurement described by a POVM $\{\Pi_k\}_{k=1}^{M}$ is performed, defining a random variable $Y_t$. Crucially, once $\rho_t$ is fixed, the conditional distribution of $Y_t$ is determined by Born's rule:
\begin{align}
P(Y_t = k \mid \mathcal{F}_{t-1}) 
= P(Y_t = k \mid \rho_t) 
= \tr(\Pi_k \rho_t).
\end{align}
Although the source may choose $\rho_t$ adaptively, it cannot influence the specific measurement outcome beyond the probabilistic constraints imposed by quantum mechanics. This conditional independence is sufficient to guarantee the statistical structure used in our analysis.

\subsection{Estimation protocol}

Our protocol operates in the single-copy regime with non-adaptive measurements. To perform the tomography, we consider two classes of measurement ensembles. First, we employ POVMs $\{\frac{d}{M} P_k \}_{k=1}^{M}$ such that the projectors $\{P_k\}_{k=1}^{M}$ form a complex projective $2$-design in dimension $d$. Examples include symmetric informationally complete POVMs (SIC-POVMs) \cite{renes2004symmetric} and complete sets of mutually unbiased bases (MUBs) \cite{klappenecker2005mutually}. Second, for $n$-qubit systems ($d=2^n$), we consider tensor products of single-qubit $2$-designs \cite{dankert2009exact}. Each POVM element is of the form $\frac{2^n}{m^n} P_{k_1} \otimes \dots \otimes P_{k_n}$, where each local set $\{P_{k_j}\}_{k_j=1}^{m}$ forms a single-qubit $2$-design. A canonical example is the Pauli measurement scheme, in which the local ensemble consists of the $m=6$ eigenstates of the operators $\{X, Y, Z\}$.

The estimation procedure consists of three steps \cite{guta2020fast}.

\begin{enumerate}
    \item \textit{Data Collection.}  
    For each round $t=1, \dots, N$, the source prepares a state $\rho_t$, which is then measured. This yields a dataset $\{Y_1, \dots, Y_N\}$, where each $Y_t$ is distributed according to
    \[
    P(Y_t = k \mid \mathcal{F}_{t-1}) = \tr(\Pi_k \rho_t).
    \]
    
    \item \textit{Linear Estimation.}
    Each outcome $Y_t$ is mapped to a single-shot matrix $\hat{\rho}_t$.
    
    For global $2$-designs, if $Y_t = k$, then
    \begin{equation}
        \hat{\rho}_t = (d+1) P_k - \mathds{1}.
    \end{equation}
    
    For local $2$-designs, if $Y_t = k_1 \dots k_n$,
    \begin{equation}
        \hat{\rho}_t = \bigotimes_{j=1}^n \left( 3 P_{k_j} - \mathds{1} \right).
    \end{equation}
        
    We compute the time-averaged linear estimator
    \begin{equation}
        \hat{L}_N = \frac{1}{N} \sum_{t=1}^{N} \hat{\rho}_t.
    \end{equation}
    
    \item \textit{Projection.}
    $\hat{L}_N$ has unit trace but is not guaranteed to be positive semidefinite. The final physical estimator is obtained by projecting $\hat{L}_N$ onto the set of density matrices $\mathcal{D}(\mathbb{C}^d)$:
    \begin{align}
        \hat{\rho}_{\mathrm{PLS}}
        =
        \operatorname*{argmin}_{\omega \in \mathcal{D}(\mathbb{C}^d)}
        \| \hat{L}_N - \omega \|_F.
    \end{align}
\end{enumerate}

The theoretical robustness of this protocol relies on the construction of the linear estimator. The defining feature of the single shot estimator $\hat{\rho}_t$ is that it is conditionally unbiased. Since the measurement outcome is independent of the prior history $\mathcal{F}_{t-1}$ given $\rho_t$, we have
\begin{equation}
    \mathbb{E}\!\left[ \hat{\rho}_t \mid \mathcal{F}_{t-1} \right] = \mathbb{E}_{k \sim p(\rho_t)} \left[ \hat{\rho}_t \right] = \rho_t.
\end{equation}
This equality holds regardless of how $\rho_t$ was prepared. If we define the estimation error at step $t$ as
\begin{align}
    X_t = \hat{\rho}_t - \rho_t,
\end{align}
then $\mathbb{E}[X_t \mid \mathcal{F}_{t-1}] = 0$. Physically, this means that although the source may choose $\rho_t$ adversarially based on the entire past history, it cannot systematically bias the expectation of $X_t$, since it cannot force a specific measurement outcome to occur. Once $\rho_t$ is fixed, the outcome is governed solely by Born’s rule, ensuring that the estimator $\hat{\rho}_t$ remains centered, in conditional expectation, on the true state. The total estimation error accumulated over $N$ rounds is therefore a sum of conditionally zero-mean fluctuations. This structure allows us to go beyond standard i.i.d.\ concentration inequalities (such as Matrix Bernstein) and instead apply the Matrix Freedman inequality, which controls cumulative deviations in terms of their predictable variance (see Appendix~\ref{app:tools}).

Note that, although the overall measurement in the protocol is fixed, specific implementations of the POVM might require the use of randomized measurements. To preserve the unbiasedness of the estimator, the measurement setting at round $t$ (for example, the random MUB implementing the $2$-design) must be sampled independently of $\rho_t$ and kept hidden from the adversary, preventing the source from adapting $\rho_t$ to the measurement performed.

Finally, the estimator $\hat{L}_N$ can also be computed from aggregate statistics, as in Ref.~\cite{guta2020fast}. Let $\{f_k\}_{k=1}^M$ denote the empirical relative frequencies of the different measurement outcomes. By linearity, $\hat{L}_N = \sum_{k=1}^M f_k \hat{\rho}_k$. Thus, in an experiment, it suffices to store the histogram of outcome counts rather than the full sequence of outcomes.

\subsection{Non-asymptotic guarantees}

A key feature of PLS tomography is its automatic adaptation to the effective rank of the state. For any state $\rho$, we define the residual spectral mass as the sum of the smallest $d-r$ singular values $\Sigma_r(\rho) = \sum_{j=r+1}^d \lambda_j(\rho)$. If $\rho$ has rank at most $r$, then $\Sigma_r(\rho)=0$.

We now state our main theoretical result, with proofs detailed in Appendix~\ref{app:proofs}.
\begin{theorem}\label{thm:main_result}
Let the source emit an arbitrary adaptive sequence of states with time-average $\bar{\rho}_N$. Fix a target rank $r$, accuracy $\epsilon > 0$, and failure probability $\delta \in (0,1)$.

With probability at least $1-\delta$, the estimation error satisfies
\begin{equation}
    \|\hat{\rho}_{\mathrm{PLS}} - \bar{\rho}_N\|_1
    \le
    \epsilon
    +
    2 \min \left\{
        \Sigma_r(\bar{\rho}_N),
        \Sigma_r(\hat{\rho}_{\mathrm{PLS}})
    \right\},
\end{equation}
provided that the number of samples $N$ satisfies:\\
For global complex projective $2$-design measurements,
    \begin{equation}
        N \ge
        \frac{64\, d\, r^2 (1 + \epsilon/24r)}{\epsilon^2}
        \log\!\left(\frac{2d}{\delta}\right).
    \end{equation}
For tensor products of local $2$-designs (random Pauli basis measurements with $d=2^n$),
    \begin{equation}
        N \ge
        \frac{32\, r^2 \bigl(3^n + 2^n \epsilon/12r\bigr)}{\epsilon^2}
        \log\!\left(\frac{2^{n+1}}{\delta}\right).
    \end{equation}

\end{theorem}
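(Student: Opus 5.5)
The argument has two parts. First, a concentration bound shows that $\hat L_N - \bar\rho_N$ is small in operator norm. Second, a deterministic perturbation bound for the Frobenius projection onto $\mathcal{D}(\mathbb{C}^d)$ turns operator-norm closeness into trace-norm closeness with the rank-dependent tail $\Sigma_r$.

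\textbf{Step 1: operator-norm concentration.} Write $\hat L_N - \bar\rho_N = \frac1N\sum_{t=1}^N X_t$. By conditional unbiasedness, $S_k=\sum_{t\le k}X_t$ is a Hermitian matrix martingale adapted to $\mathcal{F}_t$. To apply the Matrix Freedman inequality I need two bounds that hold uniformly in $\rho_t$, hence almost surely whatever the adversary does.

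\emph{Uniform bound $\|X_t\|\le R$.}
\begin{itemize}
\item Global designs: $\hat\rho_t=(d+1)P_k-\mathds 1$ has eigenvalues $d$ and $-1$, so $\|X_t\|\le d+1$.
\item Local designs: each factor $3P-\mathds 1$ has eigenvalues $2$ and $-1$, so $\|\hat\rho_t\|\le 2^n$ and $\|X_t\|\le 2^n+1$.
\end{itemize}
Constants of this order produce the $\epsilon/24r$ and $2^n\epsilon/12r$ correction terms.

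\emph{Predictable variance.} I bound $W_N=\sum_t \mathbb E[X_t^2\mid\mathcal F_{t-1}]\preceq \sum_t\mathbb E[\hat\rho_t^2\mid\mathcal F_{t-1}]$.
\begin{itemize}
\item Global designs: using the 2-design identity $\sum_k \frac dM P_k\tr(P_k\rho)=\frac{\rho+\mathds 1}{d+1}$, one gets $\mathbb E[\hat\rho_t^2\mid\mathcal F_{t-1}]=(d+1)^2\frac{\rho_t+\mathds 1}{d+1}\cdot\frac{d-1}{d+1}\ldots$ Computed exactly, this is $(d-1)\rho_t+d\mathds 1$ up to lower-order terms, so $\|W_N\|\le N(2d)$ or thereabouts.
\item Local designs: the same computation factorizes over qubits. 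It gives $\mathbb E[\hat\rho_t^2]=\mathbb E\bigotimes_j(3P-\mathds1)^2$, and its norm is bounded by $3^n$ after averaging over each local design. This is the familiar $3^n$ from the Pauli-measurement variance.
\end{itemize}

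Freedman then gives
\[
\Pr\bigl[\|S_N\|\ge Nu\bigr]\le 2d\exp\!\Big(-\frac{N u^2/2}{\sigma^2+Ru/3}\Big),
\]
with the variance proxy $\sigma^2\approx 2d$ (global) or $3^n$ (local). Setting $u=\epsilon/(4r)$ and solving for $N$ yields exactly the stated sample-complexity conditions, up to the bookkeeping of constants.

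\textbf{Step 2: deterministic projection lemma.} On the event $\|\hat L_N-\bar\rho_N\|\le \epsilon/(4r)$, I invoke the standard PLS argument of Guta et al. The Frobenius projection of a unit-trace Hermitian $\hat L$ onto states is the eigenvalue-thresholding map $\hat\rho_{\rm PLS}=[\hat L-\lambda\mathds1]_+$. For any state $\sigma$ with $\|\hat L-\sigma\|\le\tau$, one shows
\[
\|\hat\rho_{\rm PLS}-\sigma\|_1\le 4r\tau+2\min\{\Sigma_r(\sigma),\Sigma_r(\hat\rho_{\rm PLS})\}.
\]
The proof uses three ingredients: $\lambda\le\tau$ by a trace argument; the difference restricted to the span of the top-$r$ eigenvectors is controlled by rank times operator norm; and the complement is controlled by the tail mass. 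Applying this with $\sigma=\bar\rho_N$ and $\tau=\epsilon/4r$ gives the theorem.

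\textbf{Main obstacle.} The delicate step is the exact variance computation with the right constants, especially for tensor-product designs. There one must verify that the conditional second moment is bounded by $3^n$ uniformly over all, possibly entangled, $\rho_t$. I will do this by expanding $(3P-\mathds1)^2=3P+\mathds1$ per qubit and using single-qubit 2-design moments inside the partial trace. This yields an operator bounded by $\bigotimes_j(2\mathds1+\ldots)$, whose norm must be checked to be at most $3^n$. The projection lemma can be imported essentially unchanged from the i.i.d. literature, since it is deterministic.
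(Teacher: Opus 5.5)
Your route is the paper's: conditional unbiasedness makes $\sum_t X_t$ a matrix martingale, the Matrix Freedman inequality controls $\|\hat L_N-\bar\rho_N\|$, and the deterministic conversion lemma of Guta et al.\ is applied with $\tau=\epsilon/(4r)$. The paper also imports that lemma without proof. However, your range bounds as written do not give the theorem with its stated constants. With your triangle-inequality values $R=d+1$ and $R=2^n+1$, Freedman yields the sufficient conditions $N\ge 64dr^2\bigl(1+(d+1)\epsilon/(24dr)\bigr)\epsilon^{-2}\log(2d/\delta)$ and $N\ge 32r^2\bigl(3^n+(2^n+1)\epsilon/(12r)\bigr)\epsilon^{-2}\log(2^{n+1}/\delta)$. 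Both are strictly stronger than what the theorem asserts, so at the stated threshold your failure probability slightly exceeds $\delta$.

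The fix is a one-sided eigenvalue argument instead of the triangle inequality. Since $0\preceq\rho_t\preceq\mathds{1}$, Weyl gives $\lambda_{\max}(X_t)\le\lambda_{\max}(\hat\rho_t)$ and $\lambda_{\min}(X_t)\ge\lambda_{\min}(\hat\rho_t)-1$.
\begin{itemize}
\item Global designs: $\hat\rho_t$ has spectrum $\{d,-1\}$, so $X_t$ has spectrum in $[-2,d]$ and $R=d$.
\item Local designs: the eigenvalues of $\hat\rho_t$ are products of $2$'s and $-1$'s and lie in $[-2^{n-1},2^n]$, so $R=2^n$.
\end{itemize}
Plugging $R=d,\ \sigma^2=2d$ (resp.\ $R=2^n,\ \sigma^2=3^n$) per round and $u=\epsilon/(4r)$ into your Freedman display reproduces $64dr^2(1+\epsilon/24r)$ and $32r^2(3^n+2^n\epsilon/12r)$ exactly.

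The variance steps also need to be made exact rather than approximate. For global designs, $\hat\rho_t^2=(d^2-1)P_k+\mathds{1}$. The 2-design identity then gives exactly $\E[\hat\rho_t^2\mid\mathcal{F}_{t-1}]=(d-1)\rho_t+d\,\mathds{1}$, with no lower-order terms, hence $\|\E[X_t^2\mid\mathcal{F}_{t-1}]\|\le 2d$.

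For local designs, the tensor-product form $\bigotimes_j(\rho_j+2\mathds{1})$ is only literally available for product states. The norm check you defer for entangled $\rho_t$ is therefore a genuine missing step. The paper closes it in two moves. First, the map $\rho\mapsto\E[\hat\rho^2]$ is linear, so for any state it equals $\sum_{\alpha\subseteq\{1,\dots,n\}}2^{|\alpha|}\tr_\alpha(\rho_t)\otimes\mathds{1}^{\otimes\alpha}$. Second, each term has norm at most $2^{|\alpha|}$ because $\|\tr_\alpha\rho_t\|\le 1$, and these bounds sum to $3^n$.
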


For a rank-$r$ average state, Theorem~\ref{thm:main_result} gives the same $\mathcal{O}(d r^2/\epsilon^2)$ scaling as i.i.d.\ PLS tomography with the same measurement architecture~\cite{haah2017sample,guta2020fast}. The statement is pathwise in its target: the probability controls the distance to the average generated in that same run, even though this average is not known or fixed in advance.

\subsection{Accessible properties of the state trajectory}
\label{sec:trajectory_scope}

The global average discards temporal ordering, but the same estimator can be applied on predetermined time blocks. Let $B\subseteq\{1,\ldots,N\}$ be a contiguous block fixed before the experiment, and define
\begin{align}
    \bar\rho_B=\frac{1}{|B|}\sum_{t\in B}\rho_t,
    \qquad
    \hat L_B=\frac{1}{|B|}\sum_{t\in B}\hat\rho_t.
\end{align}
Projecting $\hat L_B$ and applying Theorem~\ref{thm:main_result} with $N$ replaced by $|B|$ yields a confidence guarantee for $\bar\rho_B$. A predetermined partition therefore provides a coarse-grained description of the trajectory, enabling the detection of discrete change-points or the tracking of slow temporal drift between distinct phases of the experiment. The cost is that every resolved block must contain enough samples for tomography.

The average state also gives rigorous one-sided information about some nonlinear trajectory averages. Let $g:\mathcal{D}(\mathbb{C}^d)\to\mathbb{R}$, and define 
\begin{align}
    \label{eq:G_N}
    G_N(\rho_1,\ldots,\rho_N)
    =\frac{1}{N}\sum_{t=1}^N g(\rho_t).
\end{align}
If $g$ is convex, Jensen's inequality gives $G_N\ge g(\bar\rho_N)$. For a continuous $g$, define
\begin{align}
    \omega_g(s)=\sup_{\|\rho-\sigma\|_1\le s}|g(\rho)-g(\sigma)|.
\end{align}
On any event $\|\hat\rho_{\mathrm{PLS}}-\bar\rho_N\|_1\le \epsilon$, one consequently has 
\begin{align}
G_N \ge g(\hat\rho_{\mathrm{PLS}})-\omega_g(\epsilon)
\end{align}
for convex  $g$. An analogous upper bounds hold when $g$ is concave. Thus, tomography of the average yields a lower bound on the average purity and an upper bound on the average von Neumann entropy via standard continuity estimates \cite{audenaert2007sharp, winter2016tight}. Tighter bounds can be computed directly by treating the statistical confidence region as a feasible set and applying semidefinite programming (SDP) or general convex optimization \cite{zambrano2024certification, zambrano2026certification}. These bounds are inherently one-sided. The following result shows that two-sided estimation is impossible without additional structure.

\begin{theorem}
\label{thm:first_moment_converse}
Let $g:\mathcal{D}(\mathbb{C}^d)\to\mathbb{R}$, and let $G_N$ be defined as in Eq.~\eqref{eq:G_N}. If $g$ is affine, then $G_N=g(\bar\rho_N)$, which can be estimated with vanishing worst-case error from one copy of each state in the trajectory.

If $g$ is non-affine, there exist constants $\eta_g,c_g>0$ such that every estimator $\widehat{G}_N$ obtained from one copy of each state, including estimators based on arbitrary collective measurements on $\rho_1\otimes\cdots\otimes\rho_N$, satisfies
\begin{align}
    \sup_{\rho_1,\ldots,\rho_N} \Pr\!\left( \vert{}\widehat G_N-G_N(\rho_1,\ldots,\rho_N)\vert{}>\eta_g \right) \ge \frac{1}{2}-e^{-c_gN}.
\end{align}
Thus, vanishing worst-case error is possible for all unrestricted trajectories if and only if $g$ is affine.
\end{theorem}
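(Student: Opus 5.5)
The affine case is immediate. If $g(\rho)=\tr(A\rho)+b$ on $\mathcal{D}(\mathbb{C}^d)$, then $\frac1N\sum_t g(\rho_t)=g(\bar\rho_N)$. Since $g$ is Lipschitz in trace norm with constant $\|A\|_\infty$, Theorem~\ref{thm:main_result} with $r=d$ gives $|g(\hat\rho_{\mathrm{PLS}})-G_N|\le \|A\|_\infty\epsilon$ with probability $1-\delta$, where $\epsilon=\mathcal{O}(\sqrt{d^3\log(d/\delta)/N})\to0$. (For a possibly discontinuous affine $g$ on the convex set, affinity still forces it to be the restriction of a linear functional plus a constant, because $\mathcal{D}$ has nonempty relative interior in the trace-one hyperplane, so the same bound applies.)

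For the converse, I plan a two-point (Le Cam) argument built from a failure of affinity. Since $g$ is non-affine, there exist states $\sigma_0,\sigma_1$ and $\lambda\in(0,1)$ with $\Delta:=\lambda g(\sigma_1)+(1-\lambda)g(\sigma_0)-g(\tau)\neq0$, where $\tau=\lambda\sigma_1+(1-\lambda)\sigma_0$. Here I use that a midpoint-affine function need not be affine without continuity, so I take the general-$\lambda$ definition of affinity directly. I will compare two scenarios. In scenario A every $\rho_t=\tau$, so $G_N=g(\tau)$. In scenario B the $\rho_t$ are drawn i.i.d.: $\sigma_1$ with probability $\lambda$ and $\sigma_0$ otherwise. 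Averaged over this randomness, each copy in scenario B is exactly $\tau$, so the joint state $\bigl(\lambda\sigma_1+(1-\lambda)\sigma_0\bigr)^{\otimes N}=\tau^{\otimes N}$ coincides with scenario A. Hence any estimator $\widehat G_N$, including one based on an arbitrary collective POVM, has identical output distribution in both scenarios.

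In scenario B, $G_N=g(\sigma_0)+\frac{K}{N}(g(\sigma_1)-g(\sigma_0))$ with $K\sim\mathrm{Bin}(N,\lambda)$. By Hoeffding's inequality, $|G_N-g(\tau)-\Delta|\le|\Delta|/2$ except with probability $e^{-c_gN}$, where I take $c_g=\Delta^2/(2(g(\sigma_1)-g(\sigma_0))^2)$, or any constant if $g(\sigma_1)=g(\sigma_0)$. Set $\eta_g=|\Delta|/4$. On this good event the two targets are separated by more than $2\eta_g$, so for each output value $\widehat G_N$ cannot be within $\eta_g$ of both. Writing $p$ for the probability that $|\widehat G_N-g(\tau)|\le\eta_g$, the error probability in scenario A is at least $1-p$, and in scenario B it is at least $p-e^{-c_gN}$. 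Thus one of them is at least $\frac12-\frac12e^{-c_gN}\ge\frac12-e^{-c_gN}$.

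The main obstacle is that the supremum in the statement is over deterministic trajectories, while scenario B is a random mixture. I resolve this by averaging: the error probability in scenario B equals the average over $K$-configurations of error probabilities for fixed trajectories, so some fixed trajectory $(\rho_1,\ldots,\rho_N)\in\{\sigma_0,\sigma_1\}^N$ attains at least that average. I will write this step carefully, conditioning on the realized sequence when computing $G_N$ and using that the measurement statistics in the mixture are the convex combination of the per-trajectory statistics. Finally, the "if and only if" follows by combining the two parts.
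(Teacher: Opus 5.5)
Your proposal is correct and follows essentially the same route as the paper. Both use the same Le Cam two-point comparison: the constant trajectory $\tau^{\otimes N}$ against the i.i.d.\ $\lambda$-mixture of $\sigma_0,\sigma_1$, whose averaged state is again $\tau^{\otimes N}$. Both add Hoeffding concentration of the mixture's target around $\lambda g(\sigma_1)+(1-\lambda)g(\sigma_0)$, and both average over the hidden labels to pass to a worst-case deterministic trajectory.

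Two small fixes are needed. Two-sided Hoeffding gives failure probability $2e^{-c_gN}$, not $e^{-c_gN}$; this still yields exactly your final bound $\frac12-e^{-c_gN}$. You should also make the good event strict, so that the targets are separated by strictly more than $2\eta_g$ and the error event $|\widehat G_N-G_N|>\eta_g$ actually follows.
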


The obstruction in Theorem 2 disappears once the source has additional structure. In the i.i.d. setting, multiple copies of the same state permit the estimation of nonlinear functionals. More generally, if the source supplies $k$ identical copies at each time, degree-$k$ polynomial functionals can be accessed through measurements on $\rho_t^{\otimes k}$ \cite{ekert2002direct}. Structural constraints such as symmetry, stationarity, or a parametric drift model can likewise provide the necessary correlations to bypass this limit. However, for unrestricted or adversarial trajectories measured sequentially, the realized average $\bar{\rho}_N$ effectively captures the information that is reliably accessible from single copies.

\section{Robust projected least-squares quantum process tomography}\label{sec:channel}

We now extend the robust PLS framework to the tomography of time-varying quantum channels. By exploiting the Choi--Jamio{\l}kowski isomorphism, this is equivalent to the problem of estimating a bipartite quantum state.

Consider a quantum channel $\mathcal{E}: \mathcal{L}(\mathbb{C}^{d}) \to \mathcal{L}(\mathbb{C}^{d})$ acting on a system of dimension $d$. This channel is completely characterized by its normalized Choi matrix $\rho_{\mathcal{E}} \in \mathcal{D}(\mathbb{C}^{d} \otimes \mathbb{C}^{d})$, defined as:
\begin{equation}
    \rho_{\mathcal{E}} = (\mathcal{I} \otimes \mathcal{E}) \left( |\Phi^+\rangle\langle\Phi^+| \right),
\end{equation}
where $|\Phi^+\rangle = \frac{1}{\sqrt{d}} \sum_{i=1}^d |i\rangle \otimes |i\rangle$ is the maximally entangled state. The operator $\rho_{\mathcal{E}}$ is positive semidefinite with unit trace and satisfies the marginal constraint $\tr_2(\rho_{\mathcal{E}}) = \frac{\mathds{1}}{d}$.

To perform process tomography, we consider an IC set of pure input states 
$\{{\sigma}_j\}_{j=1}^{M}$ satisfying $\frac{d}{M} \sum_{j=1}^{M} {\sigma}_j = \mathds{1}$,
together with an IC POVM $\{\frac{d}{M} P_k\}_{k=1}^{M}$ acting on the output system. If an input state ${\sigma}_j$ is prepared uniformly at random and the output is measured with the POVM, the joint outcome $(Z,Y) = (j,k)$ occurs with probability
\begin{align}
    p_{jk}
    =
    \frac{d}{M^2} 
    \tr\!\left( P_k \mathcal{E}({\sigma}_j) \right)
    =
    \frac{d^2}{M^2}
    \tr\!\left( \rho_{\mathcal{E}} ({\sigma}_j^{T} \otimes P_k) \right).
\end{align}

This identity shows that process tomography is equivalent to state tomography of the Choi state $\rho_{\mathcal{E}}$ with respect to the effective POVM $\{ \frac{d^2}{M^2} \, {\sigma}_j^{T} \otimes P_k\}_{j, k=1}^{M}$. Then, all arguments developed for state tomography under non-i.i.d.\ sources apply directly, provided that at each round the implemented channel depends only on the past history. Crucially, the channel $\mathcal{E}_t$ cannot depend on the randomly chosen input state at the same round.

\subsection{Estimation protocol}

We consider an experiment consisting of $N$ rounds. At round $t$, the device implements a channel $\mathcal{E}_t$, which may depend arbitrarily on the previous history $\mathcal{F}_{t-1}$. Our goal is to estimate the time-averaged Choi matrix $\bar{\rho}_N = \frac{1}{N} \sum_{t=1}^{N} \rho_{\mathcal{E}_t}$, corresponding to the time-averaged channel $\bar{\mathcal{E}}_N$.

The estimation procedure consists of three steps.
\begin{enumerate}
    \item \textit{Data Collection:} For each round $t$:
    \begin{enumerate}
        \item Select an input index $Z_t$ uniformly at random and prepare ${\sigma}_{Z_t}$.
        \item Apply the channel $\mathcal{E}_t$.
        \item Measure the output using the POVM $\{ \frac{d}{M} P_k\}_{k=1}^{M}$ to obtain an outcome $Y_t$. 
    \end{enumerate}
    
    \item \textit{Linear Estimation:} Construct a single-shot unbiased estimator $\hat{\rho}_t$ for the Choi state $\rho_{\mathcal{E}_t}$. \\
    For global $2$-design inputs and measurements:
    \begin{equation}
        \hat{\rho}_t = \left( (d+1) {\sigma}_{Z_t}^{T} - \mathds{1} \right) \otimes \left( (d+1) P_{Y_t} - \mathds{1} \right).
    \end{equation}
    For tensor products of local $2$-designs, 
    \begin{align}
    \hat{\rho}_t = \left[ \bigotimes_{j=1}^n \left( 3 {\sigma}_{Z_t^j} - \mathds{1} \right)^{T} \right] \otimes \left[ \bigotimes_{j=1}^n \left( 3 P_{Y_t^j} - \mathds{1} \right) \right].
    \end{align}
    We then form the estimator $\hat{L}_N = \frac{1}{N} \sum_{t=1}^N \hat{\rho}_t$.

    \item \textit{Projection:} Project $\hat{L}_N$ onto the set of valid Choi states to enforce physical constraints:
    \begin{align}
        \hat{\rho}_{\mathrm{PLS}} = \operatorname*{argmin}_{\omega \ge 0} \|\hat{L}_N - \omega\| \: \: \text{s.t.} \: \: \tr_2(\omega) = \frac{\mathds{1}}{d}.
    \end{align}
    This is a convex optimization problem that ensures the final estimate corresponds to a valid quantum channel.
\end{enumerate}

\subsection{Non-asymptotic guarantees}

The statistical analysis follows directly from the state tomography results, applied to the Choi state in dimension $d^2$. Using the Matrix Freedman inequality together with the appropriate variance and range bounds for the single-shot estimator from Ref.~\cite{surawy2022projected}, we obtain a non-asymptotic operator-norm bound $\| \hat{L}_N - \bar{\rho}_N \| \le \frac{\epsilon}{2}$ with probability $1-\delta$. The projected operator is then guaranteed to satisfy $\| \hat{\rho}_{\mathrm{PLS}} - \bar{\rho}_N \| \le \epsilon$ with the same probability. Finally, these bounds in spectral norm are converted to bounds in diamond norm using $\Vert \cdot \Vert_{\diamond} \leq d^2 \Vert \cdot \Vert$ \cite{oufkir2023sample}. 

\begin{theorem}
Let $\{\mathcal{E}_t\}_{t=1}^N$ be an adaptive sequence of quantum channels acting on a system of dimension $d$. For any accuracy $\epsilon > 0$ and failure probability $\delta$, the PLS estimator satisfies
\begin{align}
    \|\hat{\mathcal{E}}_{\mathrm{PLS}} - \bar{\mathcal{E}}_N\|_\diamond \le \epsilon
\end{align}
with probability $1-\delta$, provided the number of samples $N$ satisfies:\\
For global $2$-design inputs and measurements,
    \begin{align}
        N \ge \frac{32\, d^6 (1 + \epsilon/24)}{\epsilon^2} \log\left(\frac{2d^2}{\delta}\right).
    \end{align}
    For local $2$-designs on $n$ qubits,
    \begin{align}
         N \ge \frac{2^{4n+3} \bigl(3^{2n} + \epsilon/6\bigr)}{\epsilon^2} \log\left(\frac{2^{2n+1}}{\delta}\right).
    \end{align}
\end{theorem}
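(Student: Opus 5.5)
The plan follows the outline given just before the statement: a martingale concentration bound in operator norm for the Choi-state estimator, a deterministic stability bound for the constrained projection, and a norm conversion to the diamond distance.

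\textbf{Martingale setup.} Set $X_t = \hat{\rho}_t - \rho_{\mathcal{E}_t}$ with respect to the filtration $\mathcal{F}_{t}$. The input index $Z_t$ is sampled independently of $\mathcal{F}_{t-1}$ and of $\mathcal{E}_t$, and $Y_t$ follows Born's rule given $(Z_t,\mathcal{E}_t)$. Hence the identity $p_{jk}=\frac{d^2}{M^2}\tr(\rho_{\mathcal{E}}(\sigma_j^T\otimes P_k))$, combined with the $2$-design inversion formulas used for state tomography (applied to the effective product POVM), gives $\mathbb{E}[\hat{\rho}_t\mid\mathcal{F}_{t-1}]=\rho_{\mathcal{E}_t}$. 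The sequence $\{X_t\}$ is therefore a matrix martingale difference sequence in dimension $D=d^2$.

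\textbf{Range and variance bounds.} Next I will bound the range $\|X_t\|\le R$ and the predictable variance $\|\sum_t\mathbb{E}[X_t^2\mid\mathcal{F}_{t-1}]\|\le N\sigma^2$.

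For global designs, $\hat{\rho}_t$ is a tensor product of two factors of the form $(d+1)P-\mathds{1}$, each with norm $d$. This gives $\|\hat{\rho}_t\|\le d^2$ and hence $R\lesssim d^2$. The second moment is bounded by $\mathbb{E}[\hat{\rho}_t^2\mid\mathcal{F}_{t-1}]\preceq \mathbb{E}[\hat\rho_t^2]$, and the $2$-design property (a third-moment-type computation as in Ref.~\cite{surawy2022projected}) gives an operator bound of order $d^2\cdot$const. I will take $\sigma^2\approx 2d^2$ and $R\approx 2d^2$ (or whatever constants Ref.~\cite{surawy2022projected} supplies), uniformly over all Choi states, since the adversary is arbitrary.

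For local designs each factor contributes a norm of $2$ per qubit. This gives $R\lesssim 2^{2n}$, and the variance of order $3^{2n}$ arises from the per-qubit factor $3$ on both input and output.

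\textbf{Matrix Freedman.} The Matrix Freedman inequality then yields
\begin{align}
\Pr\left(\|\hat L_N-\bar\rho_N\|\ge \tau\right)\le 2D\exp\!\left(-\frac{N\tau^2/2}{\sigma^2+R\tau/3}\right).
\end{align}
I will set $\tau=\epsilon/(2d^2)$, which is the spectral accuracy required after the diamond-norm conversion with the factor loss of $2$ from the projection. Solving for $N$ with $\sigma^2$ and $R$ as above gives $N\gtrsim d^6(1+\epsilon/24)\epsilon^{-2}\log(2d^2/\delta)$, and the local-design analogue with $3^{2n}$ follows in the same way. Matching the precise constants is bookkeeping.

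\textbf{Projection stability.} This is the step I expect to be the main obstacle. The state case used the eigenvalue-thresholding structure, which is not available here. The feasible set $\mathcal{C}=\{\omega\ge0,\ \tr_2\omega=\mathds{1}/d\}$ contains $\bar\rho_N$, since an average of Choi states is a Choi state. A Frobenius projection is nonexpansive, but that yields only a Frobenius bound and loses dimension factors.

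To obtain the claimed $\|\hat\rho_{\mathrm{PLS}}-\bar\rho_N\|\le 2\|\hat L_N-\bar\rho_N\|$, I would first try interpreting the projection in the statement as an operator-norm projection. In that case the triangle inequality gives the bound immediately: $\|\hat\rho_{\mathrm{PLS}}-\bar\rho_N\|\le\|\hat\rho_{\mathrm{PLS}}-\hat L_N\|+\|\hat L_N-\bar\rho_N\|\le 2\|\hat L_N-\bar\rho_N\|$, because $\bar\rho_N$ is feasible. Conveniently, the statement writes $\|\cdot\|$ without a subscript, which supports this reading.

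\textbf{Diamond-norm conversion.} Finally, I will apply $\|\cdot\|_\diamond\le d^2\|\cdot\|$ to the difference of (normalized) Choi matrices, following the convention of Ref.~\cite{oufkir2023sample}. This converts the spectral bound $\epsilon/d^2$ into diamond error $\epsilon$, on the same event of probability at least $1-\delta$.
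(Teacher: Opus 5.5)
Your plan is the paper's argument. It applies Matrix Freedman in dimension $d^2$ to the Choi-state martingale at spectral accuracy $\epsilon/(2d^2)$, takes a factor $2$ from the operator-norm projection via the triangle inequality (the unsubscripted norm is the spectral norm in the paper's convention, and $\bar\rho_N$ is feasible), and finishes with $\|\cdot\|_\diamond\le d^2\|\cdot\|$.

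The one correction is in the bookkeeping for the global design. Your guess $\sigma^2\approx 2d^2$ is not a valid uniform variance bound; a replacement channel already exceeds it for $d\ge 3$. Use $R=d^2$ and $\sigma^2=4d^2$ instead, since $\mathbb{E}[\hat\rho_t^2\mid\mathcal{F}_{t-1}]=(d-1)^2\rho_t+d(d-1)(\tr_2\rho_t\otimes\mathds{1}+\mathds{1}\otimes\tr_1\rho_t)+d^2\mathds{1}$ has norm at most $(2d-1)^2$. With $\tau=\epsilon/(2d^2)$ this gives $N\ge 32d^6(1+\epsilon/(24d^2))\epsilon^{-2}\log(2d^2/\delta)$, which the stated condition implies. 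The local case comes out exactly with $R=2^{2n}$ and $\sigma^2=3^{2n}$.
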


This result confirms that PLS process tomography retains the optimal $\mathcal{O}(d^6/\epsilon^2)$ scaling~\cite{oufkir2023sample} even when the i.i.d.\ assumption is removed. Similarly, the Frobenius and trace norm guarantees from Ref.~\cite{surawy2022projected} remain valid in this regime without any increase in sample complexity.

\section{Conclusions}\label{sec:conclusions}

In this work, we have shown that PLS quantum state and process tomography remain statistically optimal even when the underlying quantum source exhibits arbitrary, adaptive, or adversarial time-dependent behavior. By removing the ubiquitous but physically restrictive i.i.d.\ assumption, we proved that the sample complexity required to reconstruct the time-averaged state $\bar{\rho}_N$ or channel $\bar{\mathcal{E}}_N$ matches that of the i.i.d.\ setting. Thus, temporal correlations and adaptive drift do not fundamentally increase the number of samples needed for accurate tomography.

The reconstructed state is the empirical average of the trajectory generated during the realized experiment. While this average may vary between experimental runs and lacks predictive power for future emissions without further assumptions, it completely determines all affine empirical functionals. Conversely, non-affine empirical functionals cannot be reliably estimated from single copies of an unrestricted sequence. Nevertheless, convex and concave quantities still admit useful one-sided bounds. To achieve stronger operational objectives, such as predicting future behavior or certifying unmeasured systems, one must move beyond the reconstruction of the realized trajectory. These goals require either additional structural constraints on the source or the implementation of alternative sampling protocols~\cite{neven2021symmetry, zhang2026efficient, navarro2026certifying}.

An alternative approach to relaxing the i.i.d.\ assumption relies on quantum de Finetti theorems~\cite{fawzi2024learning}. While this method elegantly extends generic i.i.d.\ learning algorithms to the non-i.i.d.\ regime, the price of such universality is an increased sample complexity, scaling as $\tilde{\mathcal{O}}(d^6/\epsilon^6)$ for quantum state tomography. Furthermore, there is a fundamental conceptual difference between the methods, as the de Finetti framework targets a different effective state than the trajectory average considered here. In contrast, our analysis directly reconstructs $\bar{\rho}_N$, while retaining the optimal $\mathcal{O}(d r^2/\epsilon^2)$ sample complexity of PLS tomography.

Our results therefore provide a rigorous justification for applying standard PLS estimators to data collected from noisy, drifting, or actively controlled quantum hardware. In practical characterization protocols, measurement outcomes are typically aggregated into empirical frequencies under an implicit stability assumption. While such stability may fail in the presence of temporal correlations~\cite{van2013quantum}, our analysis shows that such instability does not prevent accurate reconstruction of the physically meaningful time-averaged object $\bar{\rho}_N$ or $\bar{\mathcal{E}}_N$.

\begin{acknowledgments}
The author is grateful to Mariana Navarro, Antonio Acín and Luciano Pereira for fruitful discussions. This work was supported by the Government of Spain (Severo Ochoa CEX2019-000910-S and FUNQIP), Fundació Cellex, Fundació Mir-Puig, Generalitat de Catalunya (CERCA program), the EU Quantera project Veriqtas, and the EU and Spanish AEI project QEC4QEA.
\end{acknowledgments}

\bibliography{bib}

\clearpage

\appendix

\section{Statistical tools}\label{app:tools}

In the context of quantum state tomography via least-squares estimation, we model the sequence of estimators as a matrix-valued stochastic process. The appropriate mathematical framework for analyzing such sequences, where the state at time $t$ may depend on the previous history, is that of matrix martingales \cite{williams1991probability, tropp2011freedman}.

Formally, let $(\Omega, \mathcal{F}, P)$ be a probability space, and let 
$\mathcal{F}_0 \subset \mathcal{F}_1 \subset \dots \subset \mathcal{F}_N \subset \mathcal{F}$ 
be a filtration. A sequence of finite-dimensional matrices $\{M_t\}_{t=0}^N$ is a matrix martingale with respect to the filtration ${\mathcal{F}_t}$ if it is adapted to the filtration (that is, $M_t$ is measurable with respect to $\mathcal{F}_t$, meaning it is determined at round $t$) and satisfies:
\begin{enumerate}
    \item $\mathbb{E}[M_t \mid \mathcal{F}_{t-1}] = M_{t-1}$,
    \item $\mathbb{E}[\|M_t\|] < \infty$ for all $t$,
\end{enumerate}
where $\mathbb{E}[ \cdot \mid \mathcal{F}_{t-1}]$ denotes conditional expectation and $\|\cdot\|$ the spectral norm. The first condition tells us that, although the magnitude of each fluctuation may depend on the past (in our setting, due to adaptive state preparation), there is no systematic drift in the conditional mean.

For simplicity, we assume $M_0 = 0$. We define the difference sequence $X_t$ as
\begin{align}
X_t = M_t - M_{t-1}.
\end{align}

To bound tail probabilities, we rely on the Matrix Freedman inequality.

\begin{theorem}[Matrix Freedman inequality~\cite{tropp2011freedman}]
Let $\{M_t\}_{t=0}^N$ be a self-adjoint matrix martingale in $\mathbb{C}^{d\times d}$ adapted to a filtration $\{\mathcal{F}_t\}$, with $M_0=0$. Define the difference sequence $X_t = M_t - M_{t-1}$ and assume almost surely
\begin{align}
\|X_t\| \le R \quad \text{for all } t.
\end{align}
Define the predictable quadratic variation
\begin{align}
W_N = \sum_{t=1}^N \mathbb{E}[X_t^2 \mid \mathcal{F}_{t-1}].
\end{align}
Assume that $\|W_N\| \le \sigma^2$ almost surely. Then, for all $\tau \ge 0$,
\begin{align}
P\!\left(\|M_N\| \ge \tau\right)
\le 2d \exp\!\left(
-\frac{\tau^2/2}{\sigma^2 + R\tau/3}
\right).
\end{align}
\end{theorem}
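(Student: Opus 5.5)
The plan follows Tropp's matrix Laplace-transform argument. The idea is to build a positive scalar supermartingale from the trace exponential of the martingale, corrected by its predictable quadratic variation, and then to use Markov's inequality. By symmetry it is enough to bound $P(\lambda_{\max}(M_N)\ge\tau)$ by $d\exp(\cdot)$. The same bound applies to the martingale $-M_N$, which has the same $R$ and the same $W_N$. A union bound over the two then gives the factor $2d$ for $\|M_N\|$.

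\emph{Step 1 (one-step mgf bound).} For $\theta>0$, set $g(\theta)=(e^{\theta R}-\theta R-1)/R^2$ and $V_t=\mathbb{E}[X_t^2\mid\mathcal{F}_{t-1}]$. The scalar function $x\mapsto (e^{\theta x}-\theta x-1)/x^2$ is nondecreasing, so on $x\le R$ we get $e^{\theta x}\le 1+\theta x+g(\theta)x^2$. Since $X_t$ is self-adjoint with $\lambda_{\max}(X_t)\le R$, the transfer rule gives the semidefinite inequality $e^{\theta X_t}\preceq \mathds{1}+\theta X_t+g(\theta)X_t^2$. Taking conditional expectations and using $\mathbb{E}[X_t\mid\mathcal{F}_{t-1}]=0$ yields
\begin{align}
\mathbb{E}[e^{\theta X_t}\mid\mathcal{F}_{t-1}]\preceq \mathds{1}+g(\theta)V_t\preceq \exp(g(\theta)V_t).
\end{align}
Because $\log$ is operator monotone, this becomes $\log\mathbb{E}[e^{\theta X_t}\mid\mathcal{F}_{t-1}]\preceq g(\theta)V_t$.

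\emph{Step 2 (supermartingale).} Define $S_t=\tr\exp\!\big(\theta M_t-g(\theta)W_t\big)$, where $W_t=\sum_{s\le t}V_s$ is predictable. Write $H=\theta M_{t-1}-g(\theta)W_t$, which is $\mathcal{F}_{t-1}$-measurable because $W_t$ is. Lieb's theorem says $A\mapsto\tr\exp(H+\log A)$ is concave on positive definite matrices. Applying conditional Jensen to this concave function gives
\begin{align}
\mathbb{E}[S_t\mid\mathcal{F}_{t-1}]=\mathbb{E}[\tr\exp(H+\log e^{\theta X_t})\mid\mathcal{F}_{t-1}]\le \tr\exp\!\big(H+\log\mathbb{E}[e^{\theta X_t}\mid\mathcal{F}_{t-1}]\big).
\end{align}
The map $A\mapsto\tr\exp(A)$ is monotone, so combined with Step 1 the right-hand side is at most $\tr\exp(H+g(\theta)V_t)=S_{t-1}$. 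Hence $\mathbb{E}S_N\le S_0=d$. I expect this step to be the main obstacle. The delicate points are using Lieb's concavity correctly inside a conditional expectation, and checking that the compensator $W_t$ is predictable, so that it can be pulled into the deterministic part $H$.

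\emph{Step 3 (Markov and optimization).} Suppose $\lambda_{\max}(M_N)\ge\tau$ and $\|W_N\|\le\sigma^2$. Then $\theta M_N-g(\theta)W_N\succeq\theta M_N-g(\theta)\sigma^2\mathds{1}$, so $S_N\ge e^{\theta\tau-g(\theta)\sigma^2}$. Markov's inequality then gives $P(\lambda_{\max}(M_N)\ge\tau)\le d\,e^{-\theta\tau+g(\theta)\sigma^2}$. Choosing $\theta=R^{-1}\log(1+R\tau/\sigma^2)$ gives the Bennett-type bound $d\exp\!\big(-\tfrac{\sigma^2}{R^2}h(R\tau/\sigma^2)\big)$ with $h(u)=(1+u)\log(1+u)-u$. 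The elementary inequality $h(u)\ge \frac{u^2/2}{1+u/3}$ then turns this into the claimed $\exp\!\big(-\frac{\tau^2/2}{\sigma^2+R\tau/3}\big)$. Adding the bound for $-M_N$ completes the proof.
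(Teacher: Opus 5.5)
Your proof is correct. The paper states this inequality without proof and cites Tropp (2011), and your argument is essentially Tropp's: the one-step bound $\mathbb{E}[e^{\theta X_t}\mid\mathcal{F}_{t-1}]\preceq\exp(g(\theta)V_t)$, the Lieb-concavity supermartingale $\tr\exp(\theta M_t-g(\theta)W_t)$, Markov's inequality, the Bennett-to-Bernstein step, and applying the argument to $-M_N$ to get the factor $2d$.

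The one difference is that you work at the fixed time $N$. This is enough here because $\|W_N\|\le\sigma^2$ is assumed almost surely. Tropp instead uses a stopping time, which gives the stronger bound uniform over $k$ on the event $\{\|W_k\|\le\sigma^2\}$.

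You should also treat the degenerate cases $\sigma^2=0$ and $R=0$ separately, since your choice of $\theta$ is undefined there. Both are trivial, because then $X_t=0$ almost surely.
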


\section{Proof of Theorem \ref{thm:main_result}}\label{app:proofs}

In this appendix, we provide the detailed proof of Theorem~\ref{thm:main_result}. 
We first establish the martingale structure underlying the estimation procedure, then compute the bounds $R$ and $\sigma^2$ for $2$-designs and tensor products of local $2$-designs, and finally apply the Matrix Freedman inequality together with a trace-norm conversion argument to obtain the stated error bounds.

\subsection{Martingale structure and unbiasedness}

We now show that the estimation error admits a natural martingale structure.

Let $\mathcal{F}_{t-1}$ denote the $\sigma$-algebra generated by the full experimental history up to step $t-1$. This includes the sequence of measurement settings and outcomes, as well as any internal randomness or external parameters used by the source. Because the state $\rho_t$ prepared at step $t$ is determined entirely by this past history, it is a function of these variables. Consequently, $\rho_t$ is $\mathcal{F}_{t-1}$-measurable, meaning it is fixed conditional on the history up to round $t-1$.

At step $t$, we perform an informationally complete measurement $\{\Pi_k\}_{k=1}^M$ on $\rho_t$. This defines a random variable $Y_t \in \{1,\dots,M\}$ with
\begin{align}
P(Y_t = k \mid \mathcal{F}_{t-1})
= \mathrm{tr}(\Pi_k \rho_t).
\end{align}
For local measurements on $n$ qubits, the outcomes are denoted by a multi-index $k=k_1, \dots, k_n$.

From the observed outcome $Y_t$, we construct a single-shot least-squares estimator $\hat{\rho}_t$. For the measurement schemes considered in this work, the single-shot estimators are given by
\begin{align}
\hat{\rho}_t = (d+1)P_{Y_t} - \mathds{1},
\end{align}
for global $2$-design measurements, and
\begin{align}
\hat{\rho}_t
= \bigotimes_{j=1}^n \left(3 P_{Y_t^j} - \mathds{1}\right),
\end{align}
for tensor products of single-qubit $2$-designs, where $Y_t = Y_t^1\dots Y_t^n$ (see Appendix \ref{app:estimators}).

The key property of the least-squares estimator is unbiasedness: for any fixed state $\rho$,
\begin{align}
\mathop{\mathbb{E}}_{k \sim p(\rho)}[\hat{\rho}] = \rho.
\end{align}
Conditioning on $\mathcal{F}_{t-1}$ fixes the state $\rho_t$ at round $t$, so
\begin{align}
\mathbb{E}[\hat{\rho}_{t} \mid \mathcal{F}_{t-1}]
= \mathop{\mathbb{E}}_{k \sim p(\rho_t)}[\hat{\rho}_{t}]
= \rho_{t}.
\end{align}

We therefore define the martingale difference sequence
\begin{align}
{X}_{t} = \hat{\rho}_t - \rho_t.
\end{align}
By construction, it is centered:
\begin{align}
\mathbb{E}[{X}_{t} \mid \mathcal{F}_{t-1}] = \mathbb{E}[\hat{\rho}_{t} \mid \mathcal{F}_{t-1}] - \rho_{t} 
= 0.
\end{align}
The accumulated error $M_N = \sum_{t=1}^{N} {X}_{t}$ is therefore a matrix martingale adapted to the filtration  $\{\mathcal{F}_t\}$. 

Our goal is to control the deviation of the empirical least-squares estimator $\hat{L}_{N}  = \frac{1}{N}\sum_{t=1}^{N} \hat{\rho}_t$
from the trajectory average $\bar{\rho}_N = \frac{1}{N}\sum_{t=1}^{N} \rho_t$. 
We observe that
\begin{align}
\hat{L}_{N}  - \bar{\rho}_N
= \frac{1}{N} \sum_{t=1}^{N} (\hat{\rho}_t - \rho_t)
= \frac{1}{N}M_N.
\end{align}
Thus, concentration bounds for $M_N$ directly yield bounds for the estimation error.

\subsection{Concentration bounds for global 2-design measurements}

Consider a POVM $\{\Pi_k\}_{k=1}^M$ defined by $\Pi_k = \frac{d}{M}P_k$, where the set of rank-$1$ projectors $\{P_k\}_{k=1}^M$ forms a complex projective $2$-design. As shown in Appendix \ref{app:estimators}, if the outcome $Y_t = k$ corresponding to projector $P_k$ is obtained at time $t$, the single-shot estimator takes the value
\begin{align}
\hat{\rho}_t = (d+1)P_k - \mathds{1}.
\end{align}

We first bound the operator norm of the difference sequence $X_t = \hat{\rho}_t - \rho_t$. For each outcome $k$, we have
\begin{align}
    \|X_t\| &= \|(d+1)P_k - \mathds{1} - \rho_t\| \nonumber \\
    &\le d,
\end{align}
since subtracting a positive operator $\rho_t$ from $\hat{\rho}_t$ cannot increase the maximum eigenvalue (which is $d$), and can only decrease the minimum eigenvalue to at least $-2$. Thus, we may take $R = d$.

For the variance, since conditioning on the history $\mathcal{F}_{t-1}$ fixes $\rho_t$, we have
\begin{align}
    \mathbb{E}[X_t^2 \mid \mathcal{F}_{t-1}]
    &= \mathop{\mathbb{E}}_{k \sim p(\rho_t)}[(\hat{\rho}_t - \rho_t)^2] \nonumber \\
    &= \mathop{\mathbb{E}}_{k \sim p(\rho_t)}\,[\hat{\rho}_t^2] - \rho_t^2.
\end{align}
For $2$-designs, the second moment is \cite{guta2020fast}
\begin{align}
    \mathop{\mathbb{E}}_{k \sim p(\rho_t)}\,[\hat{\rho}_t^2] = (d-1)\rho_t + d\,\mathds{1}.
\end{align}
Substituting this back and taking the spectral norm, we obtain a bound valid for all $t$:
\begin{align}
  \bigl\| \mathbb{E}[X_t^2 \mid \mathcal{F}_{t-1}] \bigr\|
  = \|(d-1)\rho_t + d\,\mathds{1} - \rho_t^2\| 
  \le 2d.
\end{align}
Hence, $\|W_N\| = \Vert \sum_{t=1}^N \mathbb{E}[X_t^2 \mid \mathcal{F}_{t-1}] \Vert\le 2dN$. 

Applying the Matrix Freedman inequality with $\tau = N\epsilon$, we obtain
\begin{align}
    P\!\left(\|\hat{L}_N - \bar{\rho}_N\| \ge \epsilon\right)
    \le 2d \exp\!\left( -\frac{N\epsilon^2}{4d\left(1 + \epsilon/6\right)} \right).
\end{align}

\subsection{Concentration bounds for local 2-design measurements}

Let the measurement consist of a POVM constructed from local $2$-designs. This means that for each qubit $j=1,\dots,n$, we have a local $2$-design defined by rank-1 projectors $\{P_{k}\}_{k=1}^m$. Then, the POVM elements are $\Pi_{k} = \frac{d}{M} P_{k}$, where $P_{k} = \bigotimes_{j=1}^n P_{k_j}$, $M=m^n$, and the outcome $Y_t$ at time $t$ takes values $Y_t = k_1 \dots k_n$.

As derived in Appendix~\ref{app:estimators}, if outcome $Y_t$ is observed at time $t$, the corresponding single-shot estimator takes the form
\begin{align}
 \hat{\rho}_{t}
    = \bigotimes_{j=1}^n \left( 3 P_{Y_t^{j}} - \mathds{1} \right).
\end{align}

The martingale difference sequence is $X_t = \hat{\rho}_t - \rho_t$. Since the realizations of $\hat{\rho}_t$ are tensor products, their spectral norm equals the product of the local norms. Therefore, since $\rho_t$ is positive semidefinite,
\begin{align}
     \Vert X_t \Vert \leq \|\hat{\rho}_t\|
    = \prod_{j=1}^n \|3P_{Y_t^j} - \mathds{1}\|
    = 2^n
    = d.
\end{align}
We then set $R = d$.

We now compute the conditional second moment. Since $\rho_t$ is fixed given $\mathcal{F}_{t-1}$,
we have $\mathbb{E}[{X}_{t}^2 \mid \mathcal{F}_{t-1}]
    = \mathop{\mathbb{E}}_{k \sim p(\rho_t)} [\hat{\rho}_t^2]
    - \rho_t^2$.

For a projector $P$, $(3P-\mathds{1})^2 = 3P + \mathds{1}$. Then, 
\begin{align}
    \mathop{\mathbb{E}}_{k \sim p(\rho_t)} [\hat{\rho}_t^2] &= \sum_{k} \frac{d}{M} \tr(P_{k} \rho_{t})  \bigotimes_{j=1}^n (3 P_{k_j} - \mathds{1})^2 \nonumber \\
    &= \frac{d}{M} \sum_{k} \tr(P_{k} \rho_{t}) \bigotimes_{j=1}^n (3 P_{k_j} + \mathds{1}).
\end{align}
For a product state $\rho_t = \rho_{1} \otimes \dots \otimes \rho_{n}$, the sum factorizes into a product of local sums. For each qubit, we have
\begin{align}
    \frac{2}{m} \sum_{k_j=1}^m & \tr(P_{k_j}\rho_j) (3P_{k_j} + \mathds{1}) \nonumber\\
    =  \frac{2}{m} & \left[ 3 \sum_{k_j=1}^{m} \tr(P_{k_j} \rho_j)P_{k_j} + \sum_{k_j=1}^{m} \tr(P_{k_j} \rho_j)\mathds{1} \right].
\end{align}
Using the 2-design properties $\sum \tr(P_{k_j} \rho_j) P_{k_j} = \frac{m}{6}(\rho_j+\mathds{1})$ and $\sum \tr(P_{k_j} \rho_j) = \frac{m}{2}$, we obtain
\begin{align}
    \frac{2}{m} \sum_{k_j=1}^m \tr(P_{k_j} \rho_j) (3P_{k_j} + \mathds{1}) &= \rho_j + 2\mathds{1}.
\end{align}

The global second moment is the tensor product of these local terms:
\begin{align}
    \mathop{\mathbb{E}}_{k \sim p(\rho_t)} [\hat{\rho}_t^2]  &= \bigotimes_{j=1}^n (\rho_j + 2\mathds{1}) \nonumber \\
    & = \sum_{\alpha \in \mathcal{P}([n])} 2^{|\alpha|} \mathrm{tr}_\alpha (\rho_t) \otimes \mathds{1}^{\otimes \alpha},
\end{align}
where $\mathrm{tr}_\alpha (\rho_t)$ denotes the partial trace of the elements with indices in $\alpha$ and $\mathcal{P}([n])$ is the power set of $\{1, \dots, n\}$. By linearity, the same identity holds for arbitrary (not necessarily product) states.

Using $\|\operatorname{tr}_{\alpha}(\rho_t)\| \le 1$, we obtain
\begin{align}
    \left\|\mathop{\mathbb{E}}_{k \sim p(\rho_t)} \, [\hat{\rho}_t^2 ]\right\|
    \le \sum_{\alpha \in \mathcal{P}([n])} 2^{|\alpha|}
    = 3^n.
\end{align}
Therefore, for $N$ steps, $\|W_N\| \leq \Vert \mathop{\mathbb{E}}_{k \sim p(\rho_t)} [\hat{\rho}_t^2] \Vert \le N 3^n$.

We finally apply the Matrix Freedman inequality:
\begin{align}
    P\!\left(\|\hat{L}_N - \bar{\rho}_N\| \ge \epsilon\right)
    \le 2d \exp\!\left(
    -\frac{N\epsilon^2}
    {2\left(3^n + 2^n \epsilon /3\right)}
    \right).
\end{align}

\subsection{Trace-norm conversion and final bounds}

In the previous subsections, we derived concentration bounds in the spectral norm. 
To convert these bounds into trace-norm guarantees, we use an operator-to-trace norm conversion lemma based on the effective rank of the state \cite{guta2020fast}.
 
For a state $\rho$ with eigenvalues $\lambda_1 \ge \dots \ge \lambda_d$, we define the residual tail weight
\begin{align}
\Sigma_r(\rho) = \sum_{j=r+1}^d \lambda_j .
\end{align}

\begin{lemma}\label{lem:norm_conversion}
Let $\hat{L}_N$ be a unit-trace Hermitian operator (the linear estimator), and let $\hat{\rho}_{\mathrm{PLS}}$ denote its projection onto the set of quantum states in terms of the Frobenius norm. If $\|\hat{L}_N - \bar{\rho}_N\| \le \tau$, then for any rank $r$,
\begin{align}
\|\hat{\rho}_{\mathrm{PLS}} - \bar{\rho}_N\|_1
\le 4r\tau
+ 2 \min\!\left\{
\Sigma_r(\bar{\rho}_N),
\Sigma_r(\hat{\rho}_{\mathrm{PLS}})
\right\}.
\end{align}
\end{lemma}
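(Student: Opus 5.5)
The plan has two steps. First, upgrade the hypothesis $\|\hat L_N-\bar\rho_N\|\le\tau$ to the operator-norm bound $\|\hat\rho_{\mathrm{PLS}}-\bar\rho_N\|\le 2\tau$. Second, turn this into a trace-norm bound using a Weyl-type eigenvalue argument on a traceless difference, as in \cite{guta2020fast}.

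\emph{Step 1 (structure of the projection).} I will use the standard fact that the Frobenius projection of a Hermitian $\hat L_N=\sum_i \mu_i \ket{v_i}\bra{v_i}$ onto $\mathcal{D}(\mathbb{C}^d)$ is obtained by eigenvalue thresholding:
\begin{align}
\hat\rho_{\mathrm{PLS}}=\sum_i(\mu_i-x)_+\ket{v_i}\bra{v_i},
\end{align}
where the scalar $x$ is fixed by $\sum_i(\mu_i-x)_+=1$. This is the projection of the spectrum onto the probability simplex.

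Next I bound $|x|\le\tau$. By Weyl's inequality, $|\mu_i-\lambda_i(\bar\rho_N)|\le\tau$ for every $i$.
\begin{itemize}
\item If $x>\tau$, then $(\mu_i-x)_+\le(\lambda_i(\bar\rho_N)+\tau-x)_+$. This is strictly less than $\lambda_i(\bar\rho_N)$ whenever $\lambda_i(\bar\rho_N)>0$, and equals $0$ otherwise. Summing gives a total strictly below $1$, a contradiction.
\item If $x<-\tau$, then $(\mu_i-x)_+\ge\lambda_i(\bar\rho_N)-\tau-x>\lambda_i(\bar\rho_N)$ for all $i$. Summing gives a total above $1$, again a contradiction.
\end{itemize}

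Now compare $\hat\rho_{\mathrm{PLS}}$ with $\hat L_N$. The difference $\hat\rho_{\mathrm{PLS}}-\hat L_N$ is diagonal in the basis $\{\ket{v_i}\}$ with eigenvalues $\max(-x,-\mu_i)$. Since $|x|\le\tau$ and $\mu_i\ge\lambda_i(\bar\rho_N)-\tau\ge-\tau$, each such eigenvalue has modulus at most $\tau$. The triangle inequality then gives $\|\hat\rho_{\mathrm{PLS}}-\bar\rho_N\|\le2\tau$.

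\emph{Step 2 (trace-norm conversion).} Set $B=\bar\rho_N-\hat\rho_{\mathrm{PLS}}$. It is traceless, so $\|B\|_1=2\tr B_+=2\tr(-B)_+$.

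For the first bound, apply Weyl's inequality $\lambda_i(X+Y)\le\lambda_i(X)+\lambda_1(Y)$ with $X=\bar\rho_N$ and $Y=-\hat\rho_{\mathrm{PLS}}\le0$. This gives $\lambda_i(B)\le\lambda_i(\bar\rho_N)$, and Step 1 gives $\lambda_i(B)\le2\tau$. Hence
\begin{align}
\tr B_+\le\sum_{i}\min\{2\tau,\lambda_i(\bar\rho_N)\}\le 2r\tau+\Sigma_r(\bar\rho_N),
\end{align}
where I used the bound $2\tau$ on the top $r$ eigenvalues and $\lambda_i$ on the rest.

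Running the same argument on $-B=\hat\rho_{\mathrm{PLS}}-\bar\rho_N$ gives $\tr(-B)_+\le 2r\tau+\Sigma_r(\hat\rho_{\mathrm{PLS}})$. Multiplying by $2$ and taking the better of the two bounds yields the lemma.

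\emph{Expected obstacle.} The main point needing care is Step 1: the threshold bound $|x|\le\tau$, and the resulting bound on the eigenvalues of $\hat\rho_{\mathrm{PLS}}-\hat L_N$. The delicate case is eigenvalues of $\hat L_N$ that sit below the threshold, for which one needs $\mu_i\ge-\tau$. In addition, the characterization of the projection as simplex projection of the spectrum should be justified briefly, either via unitary invariance of the Frobenius norm or by citing \cite{guta2020fast}. Once the operator-norm bound $2\tau$ is in hand, Step 2 is routine.
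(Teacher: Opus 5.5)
Your proof is correct and complete. The simplex-thresholding form of the projection with $|x|\le\tau$ gives $\|\hat\rho_{\mathrm{PLS}}-\bar\rho_N\|\le 2\tau$. The Weyl bounds $\lambda_i(B)\le\min\{2\tau,\lambda_i(\bar\rho_N)\}$ and $\lambda_i(-B)\le\min\{2\tau,\lambda_i(\hat\rho_{\mathrm{PLS}})\}$ on the traceless difference then yield both branches of the minimum.

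The paper does not prove this lemma itself; it imports it from \cite{guta2020fast}. Your argument follows the same two-step structure (an operator-norm bound of $2\tau$ for the projection, then a rank-based conversion to trace norm), and it supplies the omitted proof in full.
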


To prove Theorem~\ref{thm:main_result}, we require that the estimation error be bounded by $\epsilon$ with probability at least $1-\delta$. By Lemma~\ref{lem:norm_conversion}, it suffices to ensure that the spectral deviation of the linear estimator satisfies
\begin{equation}\label{eq:tau_condition}
\tau = \frac{\epsilon}{4r}.
\end{equation}

From the bound derived in Appendix~\ref{app:proofs} for $2$-designs,
\begin{align}
P\!\left(\|\hat{L}_N - \bar{\rho}_N\| \ge \tau\right)
\le
2d \exp\!\left(
-\frac{N\tau^2}{4d(1 + \frac{\tau}{6})}
\right).
\end{align}
Substituting $\tau = \epsilon/(4r)$ yields
\begin{align}
P\!\left(\|\hat{L}_N - \bar{\rho}_N\| \ge \tau\right)
\le
2d \exp\!\left(
-\frac{N\epsilon^2}{64 d r^2 (1 + \frac{\epsilon}{24r})}
\right).
\end{align}

Therefore, if
\begin{equation}
N \ge
\frac{64 \, d \, r^2 (1 + \epsilon/24r)}{\epsilon^2}
\log\!\left(\frac{2d}{\delta}\right),
\end{equation}
then with probability at least $1-\delta$,
\begin{align}
\|\hat{\rho}_{\mathrm{PLS}} - \bar{\rho}_N\|_1
\le
\epsilon
+ 2 \min\!\left\{
\Sigma_r(\bar{\rho}_N),
\Sigma_r(\hat{\rho}_{\mathrm{PLS}})
\right\}.
\end{align}

For tensor products of local $2$-designs, the Matrix Freedman inequality gives
\begin{align}
P\!\left(\|\hat{L}_N - \bar{\rho}_N\| \ge \tau\right)
\le
2d
\exp\!\left(
-\frac{N\tau^2}{2(3^n + \frac{2^{n}\tau}{3})}
\right).
\end{align}
Substituting again $\tau = \epsilon/(4r)$ yields
\begin{align}
P\!\left(\|\hat{L}_N - \bar{\rho}_N\| \ge \tau\right)
\le
2d
\exp\!\left(
-\frac{N\epsilon^2}{32 r^2 (3^n + \frac{2^{n}\epsilon}{12r})}
\right).
\end{align}

Consequently, if
\begin{align}
N
\ge
\frac{32 \, r^2 \bigl(3^n + 2^n \epsilon/12r\bigr)}{\epsilon^2}
\log\!\left(\frac{2^{n+1}}{\delta}\right),
\end{align}
then with probability at least $1-\delta$,
\begin{align}
\|\hat{\rho}_{\mathrm{PLS}} - \bar{\rho}_N\|_1
\le
\epsilon
+ 2 \min\!\left\{
\Sigma_r(\bar{\rho}_N),
\Sigma_r(\hat{\rho}_{\mathrm{PLS}})
\right\}.
\end{align}

\qed

\section{Least-squares estimators for state tomography}\label{app:estimators}

Here, following Ref.~\cite{guta2020fast}, we re-derive the explicit form of the least-squares estimators for quantum states when the measurement ensemble forms either a global or a local complex projective 2-design.

\subsection{Estimator for global 2-designs}

\begin{theorem}
Let $\rho$ be a quantum state on a $d$-dimensional Hilbert space. Consider a measurement described by a set of POVM elements $\{\Pi_k\}_{k=1}^M$, where $\Pi_k = \frac{d}{M} P_k$ and $\{P_k\}_{k=1}^M$ is a set of rank-1 projectors forming a complex projective 2-design. The least-squares estimator of $\rho$, based on empirical relative frequencies $\hat{f}_k$, is
\begin{align}
\hat{\rho} = \sum_{k=1}^M \hat{f}_k \left[(d+1) P_k - \mathds{1} \right].
\end{align}
\end{theorem}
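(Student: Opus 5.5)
The plan is to derive the estimator from the normal equations of linear least squares and then use the $2$-design frame operator to invert them in closed form. Define the measurement map $\mathcal{A}:\omega\mapsto(\tr(\Pi_k\omega))_{k=1}^M$. The least-squares estimator is
\begin{align}
\hat{\rho}=\operatorname*{argmin}_{\omega=\omega^\dagger}\sum_{k=1}^M\bigl(\hat f_k-\tr(\Pi_k\omega)\bigr)^2 .
\end{align}
Since the POVM is IC, $\mathcal{A}$ is injective and the minimizer is
\begin{align}
\hat{\rho}=(\mathcal{A}^\dagger\mathcal{A})^{-1}\mathcal{A}^\dagger(\hat f),
\qquad
\mathcal{A}^\dagger\mathcal{A}(\omega)=\sum_k\Pi_k\tr(\Pi_k\omega),
\qquad
\mathcal{A}^\dagger(\hat f)=\sum_k\hat f_k\Pi_k .
\end{align}
The whole proof then reduces to computing the frame operator $\mathcal{A}^\dagger\mathcal{A}$ and inverting it.

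For the frame operator I use the $2$-design property. It gives
\begin{align}
\frac{1}{M}\sum_k P_k\otimes P_k=\frac{\mathds{1}+\mathbb{F}}{d(d+1)},
\end{align}
where $\mathbb{F}$ is the swap operator. Taking a partial trace against $\omega$ yields
\begin{align}
\frac{1}{M}\sum_k P_k\tr(P_k\omega)=\frac{\omega+\tr(\omega)\mathds{1}}{d(d+1)},
\end{align}
and therefore
\begin{align}
\mathcal{A}^\dagger\mathcal{A}(\omega)=\frac{d}{M(d+1)}\bigl(\omega+\tr(\omega)\mathds{1}\bigr).
\end{align}
This map has the form $c(\mathrm{id}+\mathds{1}\tr)$. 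Its inverse is $c^{-1}\bigl(\mathrm{id}-\tfrac{1}{d+1}\mathds{1}\tr\bigr)$, which one checks by composing the two, using $\tr(\mathds{1})=d$.

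Applying this inverse to $\mathcal{A}^\dagger(\hat f)=\frac{d}{M}\sum_k\hat f_kP_k$ gives
\begin{align}
\hat\rho=(d+1)\sum_k\hat f_kP_k-\Bigl(\sum_k\hat f_k\Bigr)\mathds{1}.
\end{align}
Because the frequencies sum to one, this equals $\sum_k\hat f_k\bigl[(d+1)P_k-\mathds{1}\bigr]$, which is the claimed estimator.

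The only nonroutine step is computing the frame operator. The hard part is the partial-trace identity that follows from the $2$-design second-moment formula, together with getting the normalization $\frac{d}{M}$ right so that the prefactors cancel. I would also note, as a sanity check, the consistency identity $\sum_k\Pi_k=\mathds{1}$, which follows from the $1$-design property.
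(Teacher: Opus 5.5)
Your proof is correct and follows the same route as the paper: you set up the normal equations, compute the frame operator $\frac{d}{M(d+1)}(\omega+\tr(\omega)\mathds{1})$ from the $2$-design identity (which the paper writes as $\frac{d}{M}$ times a depolarizing channel with $p=\frac{1}{d+1}$), and invert it explicitly. One small remark: the last step does not need $\sum_k\hat f_k=1$, since $(d+1)\sum_k\hat f_kP_k-(\sum_k\hat f_k)\mathds{1}=\sum_k\hat f_k[(d+1)P_k-\mathds{1}]$ follows just by distributing the sum, which is exactly how the paper's use of linearity and $\tr(P_k)=1$ avoids that assumption.
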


\begin{proof}
The probability of obtaining outcome $k$ given the state $\rho$ is $p_k = \tr(\Pi_k \rho) = \frac{d}{M} \tr(P_k \rho)$. We define the linear map $\mathcal{A}: \rho \mapsto \{p_k\}_{k=1}^M$. The least-squares estimator minimizes $\|\mathcal{A}(\rho) - \hat{f}\|^2$, yielding the solution
\begin{align}
\hat{\rho} = (\mathcal{A}^\dagger \mathcal{A})^{-1} \mathcal{A}^\dagger (\hat{f}),
\end{align}
where the adjoint map is $\mathcal{A}^\dagger(q) = \frac{d}{M} \sum_k q_k P_k$.

Next, we compute the composition $\mathcal{A}^\dagger \mathcal{A}$. For an arbitrary operator $X$,
\begin{align}
\mathcal{A}^\dagger \mathcal{A} (X)
= \frac{d^2}{M^2} \sum_{k=1}^M \tr(P_k X) P_k.
\end{align}
Since $\{P_k\}_{k=1}^{M}$ forms a 2-design, it exactly reproduces the first and second moments of the Haar measure. Then, 
\begin{align}
\frac{1}{M}\sum_{k=1}^{M} \tr(P_k X) P_k = \frac{1}{d(d+1)} \left(X + \tr(X)\mathds{1}\right).
\end{align}
Substituting,
\begin{align}
\mathcal{A}^\dagger \mathcal{A} (X) =  \frac{d}{M(d+1)} \left(X + \tr(X)\mathds{1}\right) =\frac{d}{M} \mathcal{D}_{p}(X),
\end{align}
where $\mathcal{D}_p(X) = p X + (1-p) \tr(X) \frac{\mathds{1}}{d}$ is the depolarizing channel with parameter $p = \frac{1}{d+1}$.

The inverse of this map is proportional to the inverse of the depolarizing channel. Specifically, $(\mathcal{A}^\dagger \mathcal{A})^{-1}(Y) = \frac{M}{d} \mathcal{D}_p^{-1}(Y)$. The inverse depolarizing channel is given by $\mathcal{D}_p^{-1}(Y) = (d+1)Y - \tr(Y)\mathds{1}$.

Applying this inverse to $\mathcal{A}^\dagger(\hat{f}) = \frac{d}{M} \sum_k \hat{f}_k P_k$ we obtain
\begin{align}
\hat{\rho}
&= (\mathcal{A}^\dagger \mathcal{A})^{-1} \left( \frac{d}{M} \sum_{k=1}^M \hat{f}_k P_k \right) \nonumber \\
&= \sum_{k=1}^M \hat{f}_k \left( (d+1) P_k - \tr(P_k)\mathds{1} \right) \nonumber \\
&= \sum_{k=1}^M \hat{f}_k \left( (d+1) P_k - \mathds{1} \right),
\end{align}
since $\tr(P_k)=1$.
\end{proof}

\subsection{Estimator for local 2-designs}

\begin{theorem}
Let $\rho$ be a state on an $n$-qubit system ($d=2^n$). Suppose the measurement POVM elements are tensor products of single-qubit 2-designs. Specifically, let the outcomes be indexed by $\mathbf{k} = (k_1, \dots, k_n)$, corresponding to POVM elements
\begin{align}
\Pi_{\mathbf{k}} = \frac{d}{M} \bigotimes_{j=1}^n P_{k_j},
\end{align}
where each local set $\{P_{k}\}_{k=1}^{m}$ is a rank-1 projective 2-design on qubit $j$ with $m$ outcomes. The least-squares estimator of $\rho$, based on empirical relative frequencies $\hat{f}_{\mathbf{k}}$, is
\begin{align}
\hat{\rho} = \sum_{\mathbf{k}} \hat{f}_{\mathbf{k}} 
\bigotimes_{j=1}^n \left( 3 P_{k_j} - \mathds{1} \right).
\end{align}
\end{theorem}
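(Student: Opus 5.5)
The plan mirrors the proof for global $2$-designs: write the least-squares solution as $\hat\rho=(\mathcal{A}^\dagger\mathcal{A})^{-1}\mathcal{A}^\dagger(\hat f)$, where $\mathcal{A}(\rho)_{\mathbf{k}}=\tr(\Pi_{\mathbf{k}}\rho)$. The adjoint is $\mathcal{A}^\dagger(q)=\frac{d}{M}\sum_{\mathbf{k}} q_{\mathbf{k}}\bigotimes_j P_{k_j}$, with $M=m^n$ and $d=2^n$. The central observation is that $\mathcal{A}^\dagger\mathcal{A}$ factorizes as a tensor product of single-qubit maps, so its inverse does as well.

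First I compute, for a product operator $X=X_1\otimes\cdots\otimes X_n$,
\begin{align}
\mathcal{A}^\dagger\mathcal{A}(X)=\frac{d^2}{M^2}\sum_{\mathbf{k}}\prod_j\tr(P_{k_j}X_j)\bigotimes_j P_{k_j}=\bigotimes_{j=1}^n\Bigl(\frac{4}{m^2}\sum_{k=1}^m\tr(P_kX_j)P_k\Bigr).
\end{align}
Here I use $d^2/M^2=\prod_j 4/m^2$. I then extend the identity to arbitrary $X$ by linearity, since product operators span $\mathcal{L}((\mathbb{C}^2)^{\otimes n})$.

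Next I apply the single-qubit $2$-design identity with $d=2$, namely $\frac1m\sum_k\tr(P_kY)P_k=\frac16(Y+\tr(Y)\mathds{1})$. This gives each local factor as $\frac{2}{m}\cdot\frac{1}{3}(Y+\tr(Y)\mathds{1})=\frac{2}{m}\mathcal{D}_{1/3}(Y)$. Therefore
\begin{align}
\mathcal{A}^\dagger\mathcal{A}=\frac{d}{M}\,\mathcal{D}_{1/3}^{\otimes n}.
\end{align}
Each local depolarizing map is invertible, with inverse $\mathcal{D}_{1/3}^{-1}(Y)=3Y-\tr(Y)\mathds{1}$. It follows that $(\mathcal{A}^\dagger\mathcal{A})^{-1}=\frac{M}{d}(\mathcal{D}_{1/3}^{-1})^{\otimes n}$. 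Invertibility here also confirms informational completeness, so the least-squares solution is unique.

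Finally, I apply this inverse to $\mathcal{A}^\dagger(\hat f)=\frac{d}{M}\sum_{\mathbf{k}}\hat f_{\mathbf{k}}\bigotimes_j P_{k_j}$. Because each term is a product, the inverse acts factorwise, and $\tr P_{k_j}=1$ gives
\begin{align}
\hat\rho=\sum_{\mathbf{k}}\hat f_{\mathbf{k}}\bigotimes_j\bigl(3P_{k_j}-\mathds{1}\bigr).
\end{align}

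The only step requiring care is bookkeeping the normalization constants so that the prefactors $d/M$ cancel exactly. The justification that the tensor-product factorization of $\mathcal{A}^\dagger\mathcal{A}$ holds on all operators, and not only on product ones, is straightforward by linearity.
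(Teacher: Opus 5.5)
Your proposal is correct and follows the paper's proof essentially step for step. Like the paper, you factorize $\mathcal{A}^\dagger\mathcal{A}$ on product operators (extended by linearity) into $\frac{d}{M}\mathcal{D}_{1/3}^{\otimes n}$ via the single-qubit $2$-design identity, invert factorwise using $\mathcal{D}_{1/3}^{-1}(Y)=3Y-\tr(Y)\mathds{1}$, and apply the inverse to $\mathcal{A}^\dagger(\hat f)$, with the same cancellation of the $d/M$ prefactors.
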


\begin{proof}

Let $d=2^n$ be the Hilbert space dimension and $M = m^n$ the total number of measurement outcomes, where $m$ is the number of elements in each local 2-design. Define $P_{\mathbf{k}} = \bigotimes_{j=1}^n P_{k_j}$,
so that $\Pi_{\mathbf{k}} = \frac{d}{M} P_{\mathbf{k}}$.

The probability of obtaining outcome $\mathbf{k}$ is $p_{\mathbf{k}} = \tr(\Pi_{\mathbf{k}} \rho) 
= \frac{d}{M} \tr(P_{\mathbf{k}} \rho)$. We define the measurement map $\mathcal{A}: \rho \mapsto \{p_{\mathbf{k}}\}$. As before, the least-squares estimator is
\begin{align}
\hat{\rho} = (\mathcal{A}^\dagger \mathcal{A})^{-1} \mathcal{A}^\dagger(\hat{f}),
\end{align}
with $\mathcal{A}^\dagger(\hat{f}) = \frac{d}{M} \sum_{\mathbf{k}} \hat{f}_{\mathbf{k}} P_{\mathbf{k}}$.

We now compute $\mathcal{A}^\dagger \mathcal{A}$. For a product operator 
$X = X_1 \otimes \dots \otimes X_n$, we have
\begin{align}
\mathcal{A}^\dagger \mathcal{A} (X)
&= \sum_{\mathbf{k}} \tr(\Pi_{\mathbf{k}} X) \Pi_{\mathbf{k}} \nonumber \\
&= \frac{d^2}{M^2}
\bigotimes_{j=1}^n 
\left(
\sum_{k_j=1}^m 
\tr(P_{k_j} X_j) P_{k_j}
\right).
\end{align}

Using the single-qubit 2-design identity (dimension $2$),
\begin{align}
\sum_{k_j=1}^m 
\tr(P_{k_j} X_j) P_{k_j}
= \frac{m}{6}
\left( X_j + \tr(X_j)\mathds{1} \right)
= \frac{m}{2}\, \mathcal{D}_{1/3}(X_j),
\end{align}
where $\mathcal{D}_{1/3}$ denotes the qubit depolarizing channel with parameter $1/3$.

Since $\frac{d}{M} = \left(\frac{2}{m}\right)^n$, the prefactors simplify as
\begin{align}
\mathcal{A}^\dagger \mathcal{A}
= \left( \frac{2}{m} \right)^{2n}
\bigotimes_{j=1}^n
\left( \frac{m}{2} \mathcal{D}_{1/3}^{(j)} \right) = \frac{d}{M}
\bigotimes_{j=1}^n \mathcal{D}_{1/3}^{(j)}.
\end{align}
By linearity, this identity extends to arbitrary operators $X$.

Therefore, $(\mathcal{A}^\dagger \mathcal{A})^{-1}
= \frac{M}{d}
\bigotimes_{j=1}^n
\left( \mathcal{D}_{1/3}^{(j)} \right)^{-1}$. 
The inverse of the qubit depolarizing channel is
\begin{align}
\left( \mathcal{D}_{1/3}^{(j)} \right)^{-1}(Y)
= 3Y - \tr(Y)\mathds{1}.
\end{align}

Applying this inverse to $\mathcal{A}^\dagger(\hat{f})$, we obtain
\begin{align}
\hat{\rho}
&= (\mathcal{A}^\dagger \mathcal{A})^{-1}
\left( \frac{d}{M} \sum_{\mathbf{k}} \hat{f}_{\mathbf{k}} P_{\mathbf{k}} \right) \nonumber \\
&= \sum_{\mathbf{k}} \hat{f}_{\mathbf{k}}
\bigotimes_{j=1}^n
\left( 3 P_{k_j} - \mathds{1} \right),
\end{align}
which completes the proof.
\end{proof}

\section{Proof of Theorem~\ref{thm:first_moment_converse}}
\label{app:nonlinear_proof}

\begin{proof}
If $g$ is affine, then $g(\rho)=\tr(H\rho)+c$ for some Hermitian operator $H$ and constant $c$. Therefore,
\begin{align}
    G_N
    =
    \frac{1}{N}\sum_{t=1}^N g(\rho_t)
    =
    g(\bar\rho_N),
\end{align}
which can be estimated by reconstructing $\bar\rho_N$ with the tomography protocol of Theorem~\ref{thm:main_result}.

Suppose now that $g$ is non-affine. There exist states $\rho_0,\rho_1$ and $\lambda\in(0,1)$ such that, with $\bar\rho=\lambda\rho_0+(1-\lambda)\rho_1$,
\begin{align}
    a=g(\bar\rho)\neq \lambda g(\rho_0)+(1-\lambda)g(\rho_1)=b.
\end{align}
Let $\Delta=|a-b|$, $\eta_g=\Delta/4$, and
\begin{align}
    R_N=\sup_{\rho_1,\ldots,\rho_N}
    \Pr(|\widehat G_N-G_N|>\eta_g).
\end{align}

For the constant trajectory $\bar\rho,\ldots,\bar\rho$, the protocol receives $\bar\rho^{\otimes N}$ and the target is $a$. If $Q_N$ is the resulting distribution of $\widehat G_N$, then
\begin{align}
    Q_N(|\widehat G_N-a|\le\eta_g)\ge1-R_N.
    \label{eq:constant_success}
\end{align}

Consider instead independent hidden labels $Z_t\in\{0,1\}$ with $\Pr(Z_t=0)=\lambda$, and prepare $\rho_{Z_1}\otimes\cdots\otimes\rho_{Z_N}$. Its target is
\begin{align}
    G_N^{(Z)}
    =
    \frac{1}{N}\sum_{t=1}^N g(\rho_{Z_t}).
\end{align}
For every fixed realization $Z=z$, this is a deterministic trajectory. Hence, by the definition of $R_N$,
\begin{align}
    \Pr\left(
        |\widehat G_N-G_N^{(Z)}| \leq \eta_g
        \,\middle|\, Z=z
    \right)
    \geq 1 - R_N.
\end{align}
Averaging over the hidden labels gives
\begin{align}
    \Pr\left(
        |\widehat G_N-G_N^{(Z)}|>\eta_g
    \right)
    \geq 1 - R_N.
    \label{eq:mixture_estimation_error}
\end{align}

Let $L=|g(\rho_0)-g(\rho_1)|$. If $L>0$, Hoeffding's inequality gives
\begin{align}
    q_N=\Pr(|G_N^{(Z)}-b|>\eta_g/2)
    \le2\exp\!\left(-\frac{N\Delta^2}{32L^2}\right),
    \label{eq:hidden_hoeffding}
\end{align}
whereas $q_N=0$ if $L=0$. Combining this with Eq.~\eqref{eq:mixture_estimation_error} yields
\begin{align}
    \Pr\left(
        |\widehat G_N-b|\le\frac{3\eta_g}{2}
    \right)
    \ge1-R_N-q_N.
    \label{eq:mixture_success}
\end{align}

After averaging over the hidden labels, however, the quantum state supplied to the protocol is
\begin{align}
    \E_Z\left[
        \rho_{Z_1}\otimes\cdots\otimes\rho_{Z_N}
    \right]
    =
    \bar\rho^{\otimes N}.
\end{align}
Therefore the distribution in Eq.~\eqref{eq:mixture_success} is also $Q_N$.

The intervals centered at $a$ and $b$ have respective radii $\eta_g$ and $3\eta_g/2$ and are disjoint because $5\eta_g/2=5\Delta/8<\Delta$. Combining Eqs.~\eqref{eq:constant_success} and~\eqref{eq:mixture_success} therefore yields
\begin{align}
    1 \ge 2(1-R_N)-q_N.
\end{align}
Consequently, if $L>0$,
\begin{align}
    R_N
    \ge
    \frac{1}{2}
    -
    \exp\left(
        -\frac{N\Delta^2}{32L^2}
    \right),
\end{align}
while for $L=0$ one has $R_N\ge1/2$. This proves the theorem.
\end{proof}

\end{document}